
\documentclass[letterpaper, 10 pt, conference]{ieeeconf}
\IEEEoverridecommandlockouts   
\overrideIEEEmargins           

\pdfobjcompresslevel=0 
\pdfminorversion=4

\usepackage{graphicx}  
\usepackage[tbtags]{amsmath}  
\usepackage{amssymb}   
\usepackage{microtype} 
\usepackage{cite}
\usepackage{mathtools}
\usepackage{cleveref}
\usepackage{color}
\usepackage{soul}
\usepackage{bm}
\usepackage[caption=false,font=footnotesize]{subfig}
\usepackage{siunitx}
\graphicspath{{figures/}{../figures/}}


\newtheorem{remark}{Remark}
\newtheorem{assumption}{Assumption}
\newtheorem{theorem}{Theorem}[section]

\Crefname{figure}{Figure}{Figures}
\Crefname{section}{Section}{Sections}

\crefname{section}{Sec.}{sections}
\crefname{figure}{Fig.}{figures}
\crefname{table}{Table}{tables}
\crefname{assumption}{Assumption}{Assumptions}
\crefname{remark}{Remark}{Remarks}
\crefname{theorem}{Theorem}{Theorems}

\crefformat{equation}{(#2#1#3)}
\crefrangeformat{equation}{(#3#1#4) to~(#5#2#6)}
\crefmultiformat{equation}{(#2#1#3)}%
{ and~(#2#1#3)}{, (#2#1#3)}{ and~(#2#1#3)}
\crefrangemultiformat{equation}{(#3#1#4) to~(#5#2#6)}%
{ and~(#3#1#4) to~(#5#2#6)}{, (#3#1#4) to~(#5#2#6)}{ and~(#3#1#4) to~(#5#2#6)}


\newcommand{\C}[1]{\mathcal{#1}}

\newcommand{\RealDim}[1]{\mathbb{R}^{#1}}
\newcommand{\norm}[1]{\left\lVert#1\right\rVert}

\newcommand{\state}{x}
\newcommand{\statepred}{\hat{x}_{\mathrm{RK}}}
\newcommand{\stated}{r}
\newcommand{\stateerr}{\tilde{x}}

\newcommand{\zstate}{z}
\newcommand{\zpred}{\hat{z}_{\mathrm{RK}}}

\newcommand{\act}{\eta}
\newcommand{\actr}{\act^{*}}

\newcommand{\actd}{\bar{\act}}
\newcommand{\actdfo}{\actd'}
\newcommand{\actdobs}{\actd''}
\newcommand{\acterr}{\tilde{\act}}

\newcommand{\aobs}{\hat{\act}}
\newcommand{\aobserr}{\act_e}
\newcommand{\actpred}{\aobs_{\mathrm{RK}}}

\newcommand{\sig}{u}

\newcommand{\lipf}{L_f}
\newcommand{\lipg}{L_g}
\newcommand{\lipa}{L_{\actd}}
\newcommand{\lipadot}{L_{\dot{\actd}}}

\newcommand{\liprk}{L_{\mathrm{RK}}}
\newcommand{\sampleind}{i}
\newcommand{\ts}{t_{\sampleind}}
\newcommand{\tsp}[1]{t_{\sampleind+#1}}
\newcommand{\tsm}[1]{t_{\sampleind-#1}}
\newcommand{\Ts}{T}
\newcommand{\rk}{\C{F}_{\mathrm{RK}}}
\newcommand{\istep}{h}
\newcommand{\iorder}{p}
\newcommand{\ferror}{w}
\newcommand{\boundrk}{E_{\mathrm{RK}}}
\newcommand{\delay}{\Delta}
\newcommand{\delayc}{\delay_c}
\newcommand{\delays}{\delay_s}
\newcommand{\ldelay}{\Lambda}
\newcommand{\compf}{C_f}
\newcommand{\compi}{C_0}
\newcommand{\compc}{C_{\act}}
\newcommand{\lyap}{\C{V}}
\newcommand{\eigmin}{\mathrm{EIG}_{\min}}
\newcommand{\eigmax}{\mathrm{EIG}_{\max}}
\newcommand{\gammamin}{\gamma_{\min}}
\newcommand{\gammamax}{\gamma_{\max}}
\newcommand{\lambdamin}{\lambda_{\min}}
\newcommand{\lambdamax}{\lambda_{\max}}
\newcommand{\omegamin}{\omega_{\min}}
\newcommand{\omegamax}{\omega_{\max}}


\title{\LARGE \bf Numerical Predictive Control for Delay Compensation}

\author{Xichen Shi, Michael O'Connell, and Soon-Jo Chung
\thanks{Xichen Shi, Michael O'Connell, and Soon-Jo Chung are with California Institute of Technology. \texttt{\{xshi, moc, sjchung\}@caltech.edu}.}
}

\begin{document}

\maketitle
\thispagestyle{empty}
\pagestyle{empty}

\begin{abstract}

We present a delay-compensating control method that transforms exponentially stabilizing controllers for an undelayed system into a sample-based predictive controller with numerical integration.  
Our method handles both first-order and transport delays in actuators and trades-off numerical accuracy with computation delay to guaranteed stability under hardware limitations.
Through hybrid stability analysis and numerical simulation, we demonstrate the efficacy of our method from both theoretical and simulation perspectives.


\end{abstract}

\section{Introduction}
State or control delays occur naturally in a variety of physical and cyber-physical systems. Since its introduction in 1946~\cite{tsypkin1946systems, smith1959controller, richard2003time, krstic2009delay}, time-delayed dynamics have been an active area of research and is seeing continued interest, with the popularization of vast computer networks and internet-of-things (IoT) accompanied by substantial communication lags~\cite{gao2008new,gupta2009networked}. Delay compensation techniques have also been widely used in control of power electronics~\cite{cortes2011delay,lu2017graphical} and reinforcement learning settings~\cite{schuitema2010control}.

For linear systems, delay for unstable process are often modeled as first or second order plus dead time (FOPDT or SOPDT). Classical linear feedback control can be applied and closed-loop system behavior is analyzed with transfer function approaches. It was shown that properly designed proportional-integral-derivative (PID) controllers can act as a delay compensator~\cite{visioli2006practical}. Other popular techniques include relay-based identification~\cite{padhy2006relay} and proportional-integral-proportional-derivative (PI-PD) control~\cite{majhi2000online}. For nonlinear systems, the usual consensus on the challenge of continuous delays is that the state space becomes infinite dimensional. Thus, instead of being described by ordinary differential equations (ODEs), these systems need to be modeled as functional differential equations (FDEs) or transport partial differential equations (PDEs)~\cite{richard2003time, krstic2009delay}. Accordingly, their analysis requires additional mathematical tools such as Lyapunov-Krasvoskii functionals~\cite{kharitonov2003lyapunov, mazenc2012lyapunov}. A prominent class of delay compensation methods rely on state predictions of some kind. This idea was first proposed as the Smith-predictor~\cite{smith1959controller}, and has been expanded to handle unstable processes~\cite{henson1994time}, increase robustness against uncertainties~\cite{roh1999robust}, or adapt to varying delays~\cite{bresch2009adaptive}. In theory, predictor-based methods can handle arbitrarily large delays for forward complete and strict-feedforward systems~\cite{krstic2009input}.

The FDE or PDE modeling approach has the underlying assumption that input signal is continuous in time. For control systems run on digital computers in practice, this assumption is only true when the evaluation time of the controller is much smaller compared to the transport delay of the signal. The statement is largely valid for cases considered in networked control system. However, certain real-time control applications with limited computation capacity tend to violate the continuity assumption, since controller calculation time runs at similar timescales as other delays as illustrated in~\cref{fig:sendrecq}. We take interest in the following aspects of such systems: First, the control input often corresponds to commands on actuators, which admit additional layers of control that act as a dynamic delay; Second, computation time of the controller is non-negligible and is affected by the complexity of the control algorithm; Last but not least, the discrete sampling for the control implementation poses restrictions on the stability for the continuous dynamics.

\begin{figure}[!t]
\centering
\includegraphics[width=\linewidth]{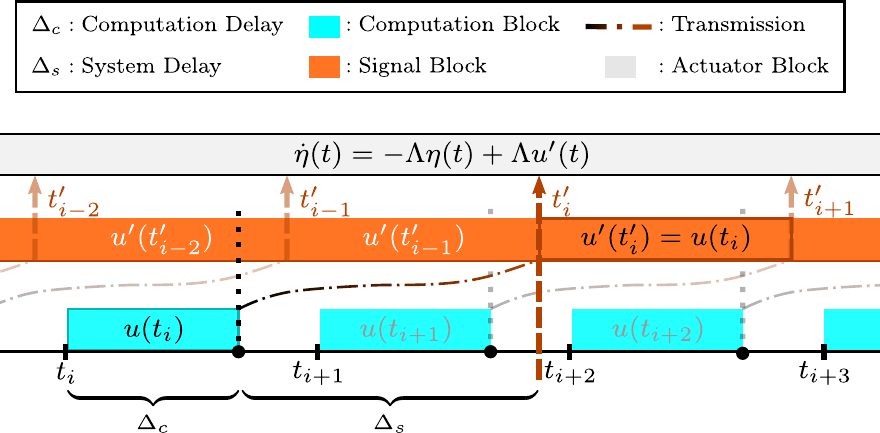}
\caption{Timeline of periodic control with computation, system, and actuator delays.  At every $t_i$, the controller begins computing a new command, $u(t_i)$, which takes $\delayc$ to calculate and an additional $\delays$ to be received and applied by the actuators.}
\label{fig:sendrecq}
\vspace{-5mm}
\end{figure}

When actuator measurements are available, it is straight forward to include actuator dynamics in the full system control design and adjust for the additional transport delay. In cases where such measurements are inaccessible, actuator observers can be constructed. This is common in applications such as multirotor control when delays exist in motor speed but output rotation may not be available~\cite{faessler2016thrust,chen2019adaptive}. Without the assumption of a continuous control signal, we resolve to use hybrid stability analysis in place of Lyapunov-Krasvoskii approach. Similar methods have been employed to show that input-to-state (ISS) stable systems inherits robustness against effects of discrete sampling or reasonable actuation delays~\cite{tabuada2007event,mazo2009input,theodosis2018self}.

\textit{Contribution \& Organization}:
In this paper we propose a periodic predictor-based controller with numerical integration or differentiation. The system in consideration includes both dynamic and transport delays. In~\cref{sec:prob}, we first introduce the undelayed, nonautonomous system of state and actuator input. Next, we make assumptions on the associated controller design. Then, we describe the sample-based FOPDT model for actuator delay. In~\cref{sec:ctrl}, we progressively augment an existing controller to compensate for dynamic and transport delays. Hybrid stability analysis is provided to study the effects of sampling and numerical methods. In~\cref{sec:analysis}, we test an example system numerically for various attributes theorized. Lastly, concluding remarks are stated in~\cref{sec:conclusion}.

\section{Problem Formulation}
\label{sec:prob}
\subsection{Notations}
We denote $\norm{x}$ as the $2$-norm for $x \in \RealDim{n}$; $\eigmin(K)$ and $\eigmax(K)$ as the minimum and maximum eigenvalues of positive definite matrix $K$ respectively. Let $[x;y;z]$ be a stack of vectors by column, and $[x, y, z]$ be one by row. We also use $I$ to represent identity matrix of appropriate size.  We define function $f : \RealDim{n} \to \RealDim{m}$ to be Lipschitz continuous on compact sets if $\forall$ compact set $\C{S} \subset \RealDim{n}$, $\exists$ constant $L$ such that $\norm{f(a) - f(b)} \leq L \norm{a - b} \ \forall a, b \in \C{S}$; function $r : \RealDim{n} \to \RealDim{m}$ is $\C{C}^k$ smooth if all of its partial derivatives up to order $k$ are continuous.
\subsection{Nonautonomous Dynamics of Trajectory Tracking}
Consider the system described by nonlinear and nonautonomous dynamics of the form
\begin{equation}
    \dot{\state} = f(\state, \act, t)
    \label{eq:dyn}
\end{equation}
where $\state \in \RealDim{n}$ is the $n$-dimensional state, and $\act \in \RealDim{m}$ is the $m$-dimensional actuator input. Given a smooth, time-prescribed, feasible reference trajectory $\stated(t)$, along with the corresponding reference control $\actr(t)$, we define the state error as $\stateerr(t) = \state(t) - \stated(t)$ corresponding dynamics
\begin{equation}
    \dot{\stateerr} = g(\stateerr, \act, t).
    \label{eq:err_dyn}
\end{equation}
$g(\stateerr, \act, t)= f(\stateerr + \stated(t), \act, t) - \dot{\stated}(t)$ is transformed from~\cref{eq:dyn}. Without loss of generality, we will focus our analysis on system~\cref{eq:err_dyn} in this paper. We also assume $\stated(t)$ is feasible for~\cref{eq:dyn} with $\actr(t)$ that guarantees $f(\stated(t), \actr(t), t) = \dot{\stated}(t)$. Therefore, along $\stated(t)$ we have
\begin{equation}
    0 = g(0, \actr(t), t)
    \label{eq:equilibrium}
\end{equation}
Additionally, we make the following assumptions:
\begin{assumption}
The function $f(\cdot)$ is Lipschitz continuous on compact sets with constant $\lipf$. The trajectory $\stated(t)$ is $\C{C}^2$ smooth with bounded derivatives. Thus it follows that $g(\cdot)$ is Lipschitz continuous on compact sets with constant $\lipg$.
\label{asm:lipschitz}
\end{assumption}
\begin{assumption}
The full state vector $\state$ is observable, the analytical form of $\stated(t)$ and its derivatives are known, but $\act$ cannot be measured directly.
\label{asm:state_measure}
\end{assumption}

Assumptions~\ref{asm:lipschitz} and \ref{asm:state_measure} are not overly restrictive. A wide class of dynamic systems possess these properties. The unavailability of measuring $\act$ is intentional, and variation of our method can compensate for delay in $\act$ without its feedback.
\subsection{Exponentially Stabilizing Control for Undelayed System}
Suppose a feedback controller of the form 
$
    \act = \actd \left(\stateerr(t), t\right)
$
has been designed, such that when applied to~\cref{eq:err_dyn}, the closed-loop system 
$
    \dot{\stateerr} = g\big(\stateerr, \actd(\stateerr, t), t\big)
$
is exponentially stable. By the Converse Lyapunov Theorem~\cite{khalil2002nonlinear}, there exists a smooth Lyapunov function $\lyap(\stateerr, t)$ such that
\begin{subequations}
\begin{gather}
    c_1 \norm{\stateerr}^2 \leq \lyap(\stateerr, t) \leq c_2 \norm{\stateerr}^2 \\
    \frac{\partial\lyap}{\partial t} + \frac{\partial\lyap}{\partial \stateerr} g(\stateerr, \actd, t)\leq -c_3 \norm{\stateerr}^2 \\
    \norm{\frac{\partial \lyap}{\partial \stateerr}} \leq c_4 \norm{\stateerr}
\end{gather}
\label{eq:converse_lyap}
\end{subequations}
Likewise, we assume smoothness of the controller function:
\begin{assumption}
The function $\actd(\cdot)$ is Lipschitz continuous on compact sets with constant $\lipa$.
\label{asm:lip_act}
\end{assumption}

We can differentiate $\actd(\stateerr, t)$ and use~\cref{eq:err_dyn} to get
\begin{equation}
    \dot{\actd}(\stateerr, \act, t) = \frac{\partial\actd}{\partial\stateerr}g(\stateerr, \act, t) + \frac{\partial \actd}{\partial t}.
    \label{eq:actdot_analytic}
\end{equation}
Based on~\cref{asm:lip_act}, it can be shown that $\dot{\actd}(\stateerr, \act, t)$ is also Lipschitz continuous on compact sets, and we define its Lipschitz constant as $\lipadot$.
\subsection{Delay in Systems with Sample-based Control}
In practice, control input $\act(t)$ lag behind the actual command signal $u(t)$ generated by a sample-based control system. We choose to describe the combined delay as a sample-based first-order plus dead time (FOPDT) model defined between sample interval $t\in [\ts', \tsp{1}')$:
\begin{equation}
    \dot{\act}(t) = -\ldelay\act(t) + \ldelay \sig'(t), \quad \sig'(t) = \sig(\ts'-\delay) \label{eq:delay_all}
\end{equation}
with $\ts' = \ts + \delay$ being the time at which actuator received the control signal computed from samples at $\ts$, and $\ldelay \succ 0$ is a diagonal matrix whose entries are rates of convergence of $\act$. The signal generated by the controller $\sig(t-\delay)$ is delayed by $\delay$ when it is received by the actuator as $\sig'(t)$. \Cref{fig:sendrecq} illustrates such process at sample time $\ts$: $\delayc$ is the computation delay, which is the time needed to compute a control signal; $\delays$ is the combined system delay in other parallel processes (e.g network latency, downstream process, etc.). We express total transport delay as $\delay = \delayc + \delays$.

\section{Delay Compensation Control}
\label{sec:ctrl}
\begin{table}[t]
    \renewcommand{\arraystretch}{1.7}
    \centering
    \caption{Summary of Control Methods}
    \begin{tabular}{ p{0.20\linewidth} p{0.70\linewidth} }
    \hline
    Baseline~\cref{eq:converse_lyap}  &  $\actd \left(\stateerr(t), t\right)$\\ 
    Actuator Delay~\cref{eq:ctrl_fo} & $\actdfo(\stateerr, \act, t) = \actd(\stateerr,t) + \ldelay^{-1}\dot{\actd}(\stateerr, \act, t)$\\
    Observer-based~\cref{eq:ctrl_fo_obs} &
    { 
      $ 
        \begin{aligned}
            \actdobs(\stateerr,\aobs,t) &= (I - \ldelay^{-1}\Gamma)\aobs + \ldelay^{-1}\Gamma\actd(\stateerr,t) \\ 
            &\quad + \ldelay^{-1}\dot{\actd}(\stateerr,\aobs, t)
        \end{aligned}
      $
    }
    \\
    Predictive~\cref{eq:ctrl_sample_pred} & $\actdobs\big(\zpred(\ts+\delay), \ts+\delay\big)$ \\
    Truncated~\cref{eq:ctrl_fo_trunc} & $\actd''_{\mathrm{FO}}(\ts+\delay) = \actd(\ts) + (\ldelay^{-1} + \delay)\frac{\actd(\ts) - \actd(\tsm{1})}{\Ts}$
    \end{tabular}
    \label{tab:ctrl_summary}
    \vspace{-5mm}
\end{table}

We first devise control that compensates for first-order dynamic delay; then we introduce a general class of predictive controllers with a numerical integration scheme to account for large transport delays. The stability of the combined method will be analyzed under discrete sampling and integration. A summary of the proposed methods is shown in~\cref{tab:ctrl_summary}.
\subsection{Derivative Compensation for First-order Delay}
Consider the case with only first-order delay, when $\actd(\stateerr,t)$ is naively applied to $\sig'=\actd(\stateerr,t)$ in~\cref{eq:delay_all}, the combined closed-loop system for actuation error $\acterr = \act - \actd$ becomes
\begin{equation}
    \dot{\acterr} = -\ldelay\acterr - \dot{\actd}(\stateerr,t), \label{eq:acterr_dyn} 
\end{equation}
which can be shown using the Comparison Lemma~\cite{khalil2002nonlinear} that
\begin{equation}
    \norm{\acterr(t)} \leq \norm{\acterr(t_0) }e^{-\lambdamin(t - t_0)} + \frac{1}{\lambdamin}\sup_{\stateerr, t}\norm{\dot{\actd}(\stateerr,t)}
\end{equation}
with $\lambdamin = \eigmin(\ldelay)$ being the minimum first-order gain of actuators. Thus actuation error $\norm{\acterr(t)}$ converges exponentially to a bounded region determined by $\norm{\dot{\actd}(\stateerr,t)}$, which is affected by the smoothness of trajectory as seen in~\cref{eq:actdot_analytic}. We propose to extend the original controller with command derivative feedback to overcome such deficiency.

\begin{theorem}
With system defined in~\cref{eq:err_dyn,eq:delay_all}, and controller $\actd(\stateerr,t)$ that satisfies~\cref{eq:converse_lyap}, the augmented controller
\begin{equation}
    \sig' =\actdfo(\stateerr, \act, t) = \actd(\stateerr,t) + \ldelay^{-1}\dot{\actd}(\stateerr, \act, t)
    \label{eq:ctrl_fo}
\end{equation}
exponentially stabilizes the closed-loop systems~\cref{eq:err_dyn,eq:acterr_dyn}.
\label{thm:ctrl_fo}
\end{theorem}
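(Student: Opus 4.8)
Before seeing the author's proof, here is how I would attack \Cref{thm:ctrl_fo}. The plan is to first exploit an exact cancellation produced by the derivative-compensation term in \cref{eq:ctrl_fo}, reducing the actuation-error channel to a clean exponentially stable system, and then to close the argument on the resulting cascade with a composite Lyapunov function.

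First I would substitute $\sig' = \actdfo = \actd + \ldelay^{-1}\dot{\actd}$ into the FOPDT actuator model \cref{eq:delay_all}, obtaining $\dot{\act} = -\ldelay\act + \ldelay\actd + \dot{\actd} = -\ldelay\acterr + \dot{\actd}$. Because the feedforward term uses the analytic derivative \cref{eq:actdot_analytic} evaluated at the true $\act$, it coincides with the actual time derivative of $\actd$ along trajectories, so $\dot{\acterr} = \dot{\act} - \dot{\actd} = -\ldelay\acterr$. The compensation therefore replaces the disturbed dynamics \cref{eq:acterr_dyn} by the homogeneous, exponentially stable system $\dot{\acterr} = -\ldelay\acterr$, and what remains is a cascade in which $\acterr$ enters the state-error subsystem $\dot{\stateerr} = g(\stateerr, \actd + \acterr, t)$ as a vanishing perturbation of the nominal closed loop.

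To certify the cascade I would take the composite candidate $W(\stateerr, \acterr, t) = \lyap(\stateerr, t) + \tfrac{\alpha}{2}\norm{\acterr}^2$ with a weight $\alpha > 0$ to be fixed. Differentiating along the closed loop and splitting $g(\stateerr, \actd + \acterr, t) = g(\stateerr, \actd, t) + [g(\stateerr, \actd + \acterr, t) - g(\stateerr, \actd, t)]$, the converse-Lyapunov inequalities \cref{eq:converse_lyap} bound the first part by $-c_3\norm{\stateerr}^2$, while Cauchy--Schwarz together with $\norm{\partial\lyap/\partial\stateerr} \leq c_4\norm{\stateerr}$ and the Lipschitz constant $\lipg$ bound the cross term by $c_4\lipg\norm{\stateerr}\norm{\acterr}$; the $\acterr$-channel contributes $-\alpha\lambdamin\norm{\acterr}^2$. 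Collecting terms yields $\dot W \leq -c_3\norm{\stateerr}^2 + c_4\lipg\norm{\stateerr}\norm{\acterr} - \alpha\lambdamin\norm{\acterr}^2$, a quadratic form in $(\norm{\stateerr}, \norm{\acterr})$ whose coefficient matrix is positive definite precisely when $\alpha\lambdamin c_3 > (c_4\lipg)^2/4$. Choosing any $\alpha > (c_4\lipg)^2/(4\lambdamin c_3)$ makes the form negative definite, giving $\dot W \leq -c(\norm{\stateerr}^2 + \norm{\acterr}^2)$; since $W$ is sandwiched between positive multiples of $\norm{\stateerr}^2 + \norm{\acterr}^2$, the combined state $[\stateerr;\acterr]$ is exponentially stable.

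I expect the subtlety to be bookkeeping rather than genuine difficulty. The one point that must be checked carefully is that the analytic derivative appearing in the controller matches the trajectory derivative, so the cancellation in the first step is truly exact; if it held only approximately, the $\acterr$ subsystem would pick up a residual forcing and the cascade argument would degrade to a practical-stability bound. The remaining technical caveat is that $\lipg$ is a Lipschitz constant only on compact sets, so the estimates should be stated on a positively invariant compact region. The cross-term domination via the weight $\alpha$ is the standard cascade/small-gain step and should be routine.
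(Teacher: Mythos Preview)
Your proposal is correct and follows essentially the same route as the paper: a composite Lyapunov function $\lyap + \alpha\norm{\acterr}^2$ (your $\tfrac{\alpha}{2}$ normalization is an inessential rescaling), the add-and-subtract of $g(\stateerr,\actd,t)$ to isolate the cross term, and the same $2\times2$ positive-definiteness condition on $\alpha$. Your explicit observation that the controller produces the exact cancellation $\dot{\acterr}=-\ldelay\acterr$ is precisely what the paper uses implicitly when it simplifies $2\alpha\acterr^\top(-\ldelay\act+\ldelay\sig'-\dot{\actd})$ to $-2\alpha\lambdamin\norm{\acterr}^2$.
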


\begin{proof}
We choose a candidate Lyapunov function $\lyap_1 = \lyap + \alpha \norm{\acterr}^2$, where $\lyap$ is from~\cref{eq:converse_lyap} and $\alpha > (c_4^2\lipg^2)/(8c_3\lambdamin)$. Using~\cref{eq:converse_lyap,eq:delay_all,eq:ctrl_fo}, we differentiate $\lyap_1$ with respect to $t$ and obtain
\begin{align}
    \dot{\lyap}_1 &= \frac{\partial\lyap}{\partial t} + \frac{\partial\lyap}{\partial \stateerr} \big(g(\stateerr, \act, t) \pm g(\stateerr, \actd, t)\big) + 2\alpha \acterr^\top \dot{\acterr} \notag \\
    &\leq -c_3 \norm{\stateerr}^2 +\frac{\partial\lyap}{\partial \stateerr} \big(g(\stateerr, \act, t) - g(\stateerr, \actd, t)\big) \notag \\
    & \hspace{1.2in}+ 2\alpha \acterr^\top \big(-\ldelay\act + \ldelay \sig' - \dot{\actd}\big) \notag \\
    &\leq -c_3 \norm{\stateerr}^2 + c_4 \lipg \norm{\stateerr}\norm{\acterr}- 2\alpha \lambdamin \norm{\acterr}^2 \notag \\
    &\leq -
    \begin{bmatrix}
    \norm{\stateerr} \\
    \norm{\acterr}
    \end{bmatrix}^\top
    \underbrace{
    \begin{bmatrix}
    c_3 & -c_4\lipg /2 \\
    -c_4\lipg/2 & 2\alpha\lambdamin 
    \end{bmatrix}
    }_{K_1}
    \begin{bmatrix}
    \norm{\stateerr} \\
    \norm{\acterr}
    \end{bmatrix}
    \notag
    \\
    &\leq -c_3' \norm{\theta}^2
\label{eq:lyapa_dot}
\end{align}
with $\theta = [\stateerr; \acterr]$ as the combined error vector, symmetric matrix $K_1\succ0$ given $\alpha > (c_4^2\lipg^2)/(8c_3\lambdamin)$, and $c_3' = \eigmin(K_1)$. Furthermore, let $c_1' = \min\{c_1, \alpha\}$ and $c_2' = \max\{c_2, \alpha\}$, we can get $c_1' \norm{\theta}^2 \leq \lyap_1 \leq c_2'\norm{\theta}^2$. Thus,
\begin{equation*}
    \norm{\theta(t)} \leq \sqrt{\frac{c_2'}{c_1'}}\norm{\theta(t_0)}\exp\left(-\frac{c_3'}{2c_2'}(t-t_0)\right),
\end{equation*}
which proves $[\stateerr; \acterr]$ converges exponentially with rate $c_3'/(2c_2')$.
\end{proof}
\begin{remark}
Equivalently, if $\dot{\stateerr}$ is available through direct measurement or numerical differentiation, then
\begin{equation}
    \dot{\actd}(\stateerr,\dot{\stateerr}, t) = \frac{\partial\actd}{\partial\stateerr}\dot{\stateerr} + \frac{\partial \actd}{\partial t}
    \label{eq:actdot_numeric}
\end{equation}
and controller~\cref{eq:ctrl_fo} can be implemented without $\act$ feedback. Nevertheless rate of convergence is limited by $\lambdamin$ of the underlying actuators.
\end{remark}
\subsection{Improved Delay Compensation with Actuator Observer}
An actuator-observer is needed if we were to increase the convergence rate on $\acterr$ beyond $\ldelay$. We define $\aobs \in \RealDim{m}$ to be the estimation of $\act$, and the observer error is their difference $\aobserr = \aobs -\act$. In this work, we assume an observer with the following property is available.

\begin{assumption}
An $\act$ observer can be designed such that the closed-loop dynamics of estimation error satisfies
\begin{equation}
    \dot{\aobserr} = - \Omega(\stateerr,t) \aobserr,
    \label{eq:act_obs}
\end{equation}
where $\Omega(\stateerr, t)$ is always positive definite. We can define its minimum and maximum eigenvalues as $\omegamin = \inf_{\stateerr, t}\eigmin{\Omega(\stateerr, t)}$, $\omegamax = \sup_{\stateerr, t}\eigmax{\Omega(\stateerr, t)}$.
\label{asm:act_obs}
\end{assumption}

A trivial observer of such type is $\dot{\aobs} = -\ldelay \aobs + \ldelay \sig'$, since the first-order delay is a stable system. However, if we want to increase rate of convergence of $\acterr$, it would be favorable to have $\omegamin > \lambdamin$. With availability of measurement stated in~\cref{asm:state_measure}, a reduced-order Luenberger observer for a linear system or a contraction-based PD observer for a nonlinear system~\cite{lohmiller1998contraction} can be utilized.

The observer-based delay compensation controller that increases overall rate of convergence is stated as follows.
\begin{theorem}
With the system defined in~\cref{eq:err_dyn,eq:delay_all}, and controller $\actd(\stateerr,t)$ that satisfies~\cref{eq:converse_lyap}, the augmented controller that incorporates estimated actuator input
\begin{align}
    \sig'&=\actdobs(\stateerr,\aobs,t) \notag\\
    &= (I - \ldelay^{-1}\Gamma)\aobs + \ldelay^{-1}\Gamma\actd(\stateerr,t) + \ldelay^{-1}\dot{\actd}(\stateerr,\aobs, t)
    \label{eq:ctrl_fo_obs}
\end{align}
exponentially stabilizes the closed-loop systems~\cref{eq:err_dyn,eq:acterr_dyn} with increased rate of convergence than controller~\cref{eq:ctrl_fo}.
\label{thm:ctrl_fo_obs}
\end{theorem}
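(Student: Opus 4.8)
The plan is to mirror the proof of \cref{thm:ctrl_fo}, but with an augmented error state that also tracks the observer error $\aobserr$. First I would substitute the controller \cref{eq:ctrl_fo_obs} into the actuator dynamics \cref{eq:delay_all}. Writing $\aobs = \act + \aobserr$ and $\acterr = \act - \actd$, the term $-\ldelay\act$ combines with $(\ldelay - \Gamma)\aobs$ so that the effective gain on $\act$ collapses from $-\ldelay$ to $-\Gamma$; after subtracting the true $\dot{\actd}$ to form $\dot{\acterr} = \dot{\act} - \dot{\actd}$, I expect the actuation-error dynamics to take the form \[ \dot{\acterr} = -\Gamma\acterr + (\ldelay - \Gamma)\aobserr + \delta, \] where $\delta = \dot{\actd}(\stateerr, \aobs, t) - \dot{\actd}(\stateerr, \act, t)$ is the mismatch incurred by evaluating the feedforward derivative at the estimate $\aobs$ rather than the true $\act$. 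By the Lipschitz property of $\dot{\actd}$ with constant $\lipadot$, this mismatch obeys $\norm{\delta} \leq \lipadot\norm{\aobserr}$, so it is a perturbation that vanishes with the observer error.

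The key structural observation is that the three error channels form a cascade: by \cref{asm:act_obs} the observer error evolves autonomously as $\dot{\aobserr} = -\Omega(\stateerr, t)\aobserr$ and decays at rate $\omegamin$ independently of $\stateerr$ and $\acterr$, while $(\stateerr, \acterr)$ is driven by $\aobserr$ only through the vanishing terms $(\ldelay - \Gamma)\aobserr$ and $\delta$. I would therefore take the composite Lyapunov function \[ \lyap_2 = \lyap + \alpha\norm{\acterr}^2 + \beta\norm{\aobserr}^2 \] with $\lyap$ from \cref{eq:converse_lyap}. Differentiating and reusing the estimate from \cref{thm:ctrl_fo} (adding and subtracting $g(\stateerr, \actd, t)$ gives $\frac{\partial\lyap}{\partial t} + \frac{\partial\lyap}{\partial\stateerr}g(\stateerr, \act, t) \leq -c_3\norm{\stateerr}^2 + c_4\lipg\norm{\stateerr}\norm{\acterr}$), together with $2\alpha\acterr^\top\dot{\acterr} \leq -2\alpha\gammamin\norm{\acterr}^2 + 2\alpha(\norm{\ldelay - \Gamma} + \lipadot)\norm{\acterr}\norm{\aobserr}$ and $2\beta\aobserr^\top\dot{\aobserr} \leq -2\beta\omegamin\norm{\aobserr}^2$, I would collect everything into a quadratic form $\dot{\lyap}_2 \leq -\zeta^\top K_2 \zeta$ in $\zeta = [\norm{\stateerr}; \norm{\acterr}; \norm{\aobserr}]$, where $K_2$ is tridiagonal with diagonal $(c_3,\, 2\alpha\gammamin,\, 2\beta\omegamin)$, a $\stateerr$--$\acterr$ off-diagonal entry $-c_4\lipg/2$, and an $\acterr$--$\aobserr$ off-diagonal entry $-\alpha(\norm{\ldelay - \Gamma} + \lipadot)$.

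It then remains to certify $K_2 \succ 0$ and extract the rate, which I would do by Sylvester's criterion in two stages: first pick $\alpha > (c_4^2\lipg^2)/(8c_3\gammamin)$ to make the leading $2\times 2$ block positive definite exactly as in \cref{thm:ctrl_fo}; then, with $\alpha$ now fixed, choose $\beta$ large enough that the full determinant $2\beta\omegamin\det(K_2^{(2)}) - c_3\alpha^2(\norm{\ldelay-\Gamma}+\lipadot)^2$ is positive, which is always achievable since raising $\beta$ enlarges the $(3,3)$ entry without affecting the couplings. With $c_1'' = \min\{c_1, \alpha, \beta\}$ and $c_2'' = \max\{c_2, \alpha, \beta\}$ sandwiching $\lyap_2$, the standard comparison argument gives exponential convergence of $[\stateerr; \acterr; \aobserr]$ at rate $\eigmin(K_2)/(2c_2'')$. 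The increased-rate claim follows because the effective feedback gain on $\acterr$ is now $\Gamma$ rather than the actuator's own $\ldelay$, so selecting $\Gamma$ with $\gammamin > \lambdamin$ enlarges the $(2,2)$ entry relative to the $2\alpha\lambdamin$ appearing in $K_1$ of \cref{thm:ctrl_fo}. I expect the main obstacle to be the bookkeeping of the mismatch $\delta$ and the $(\ldelay - \Gamma)\aobserr$ term: both must be shown to be genuinely dominated by the decaying observer channel so that the staged choice of $(\alpha, \beta)$ closes without circularity (the admissible $\beta$ depends on $\alpha$ but not conversely), and making the ``faster than $\lambdamin$'' statement precise rather than merely heuristic is the other delicate point.
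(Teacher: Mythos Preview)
Your proposal is correct and follows essentially the same route as the paper: the same composite Lyapunov function $\lyap_2 = \lyap + \alpha\norm{\acterr}^2 + \beta\norm{\aobserr}^2$, the same tridiagonal $K_2$ in $[\norm{\stateerr};\norm{\acterr};\norm{\aobserr}]$, and the same staged Sylvester choice of $\alpha$ then $\beta$. If anything you are more explicit than the paper, which collapses the $(\acterr,\aobserr)$ coupling into a single constant $\rho=\eigmax(\Gamma-\ldelay)$ without separately displaying the $\lipadot\norm{\aobserr}$ contribution from your mismatch term $\delta$; your bookkeeping of that term is the right thing to do.
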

\begin{proof}
Similar to~\cref{thm:ctrl_fo}, we select a candidate Lyapunov function $\lyap_2 = \lyap + \alpha\norm{\acterr}^2 + \beta\norm{\aobserr}^2$. Taking time-derivative and substituting in~\cref{eq:converse_lyap,eq:act_obs,eq:delay_all,eq:ctrl_fo_obs}, we get the following relationship after some simplifications:
\begin{align}
    \dot{\lyap}_2 &= \dot{\lyap} + 2\alpha \acterr^\top \dot{\acterr} + 2\beta \aobserr^\top \dot{\aobserr} \notag  \\
    &\leq -
    \begin{bmatrix}
    \norm{\stateerr} \\
    \norm{\acterr} \\
    \norm{\aobserr}
    \end{bmatrix}^\top
    \underbrace{
    \begin{bmatrix}
    c_3  & -c_4 \lipg  / 2 & 0\\
    -c_4 \lipg  / 2 & 2\alpha\gammamin  & -\alpha\rho \\
    0 & -\alpha\rho  & 2\beta \omegamin 
    \end{bmatrix}
    }_{K_2}
    \begin{bmatrix}
    \norm{\stateerr} \\
    \norm{\acterr} \\
    \norm{\aobserr}
    \end{bmatrix}
    \notag \\
    &\leq - c_3''\norm{\zstate}^2
    \label{eq:lyapo_dot}
\end{align}
We define the combined error vector $\zstate = [\stateerr; \acterr; \aobserr]$, constants $\gammamin =\eigmin(\Gamma)$ and $\rho=\eigmax(\Gamma-\ldelay)$. If we choose $\alpha$ and $\beta$ such that 
\begin{align*}
    \alpha &> (c_4^2\lipg^2)/(8c_3\lambdamin) \\
    \beta &> \frac{2 c_3 \alpha^2 \rho^2}{\omegamin(8 c_3 \alpha \gammamin - c_4^2 \lipg^2)}
\end{align*}
then we can guarantee $K_2\succ0$ and define $c_3''=\eigmin(K_2)$. Letting $c_1''=\min\{c_1, \alpha, \beta\}$, $c_2''=\max\{c_2, \alpha, \beta\}$, and consequently $c_1'' \norm{\zstate}^2 \leq \lyap_2 \leq c_2''\norm{\zstate}^2$, we obtain 
\begin{equation*}
    \norm{\zstate(t)} \leq \sqrt{\frac{c_2''}{c_1''}}\norm{\zstate(t_0)}\exp\left(-\frac{c_3''}{2c_2''}(t-t_0)\right),
\end{equation*}
which proves $[\stateerr; \acterr; \aobserr]$ converges exponentially with rate $c_3''/(2c_2'')$
\end{proof}
\begin{remark}
Although the overall rate of convergence is improved with the introduction of observer~\cref{eq:act_obs}, tracking performance is now tied with estimation error $\aobserr$, which will be affected by sensor noise or model error in practice. 
\end{remark}
\begin{remark}
When setting $\Gamma = \Lambda$, \Cref{eq:ctrl_fo_obs} reduces to~\cref{eq:ctrl_fo}, and the dependence on $\aobs$ is dropped. Thus we can treat \cref{thm:ctrl_fo} as a special case of \cref{thm:ctrl_fo_obs}.
\end{remark}

\subsection{Numerical Predictive Control under Periodic Sampling}
Starting with the continuous time formulation from~\cref{thm:ctrl_fo_obs}, we propose to extend the controller with predicted future states to account for transport delays. In the literature (e.g, \cite{krstic2009delay}), predictors are often treated as continuous integration of dynamics from current state:
\begin{equation*}
    \hat{\state}(t+\delay) = \state(t) + \int_{t}^{t+\delay} f(\hat{\state}(s), \sig(s-\delay), s) ds
\end{equation*}
Instead, we consider a predictor in the form of discrete numerical integration. Our controller is activated periodically at sample times $\ts$. A general fixed step-size Runge-Kutta (RK) integration method is then used to predict state and actuator input at $\ts'=\ts+\delay$
\begin{equation}
    \begin{bmatrix}
    \statepred(\ts') \\
    \actpred(\ts')
    \end{bmatrix}
    =
    \rk\left(\state(\ts), \aobs(\ts), \ts, \delay, \istep, \iorder\right).
    \label{eq:rk_pred}
\end{equation}
We denote $\rk(\cdot)$ as the integration scheme, with accuracy of order $p$, stepsize $h$ and time horizon $\delay$. 

For ease of analysis, we vertically stack $\zstate = [\stateerr; \acterr; \aobserr]$, and rewrite $\act = \actd(\stateerr, t) + \acterr$. Then we have
\begin{equation}
    \dot{\zstate} =
    \begin{bmatrix}
    g\left(\stateerr, \act, t\right) \\
    -\ldelay \actd(\stateerr, t) - \ldelay \acterr - \dot{\actd}(\stateerr, \act, t) + \ldelay \sig \\
    -\Omega(\stateerr, t)\aobserr
    \end{bmatrix}
    = \xi(\zstate, \sig, t).
\label{eq:dyn_z}
\end{equation}
Furthermore, we limit $p \in \{1, 2, 3, 4\}$ and make the following assumption about the bound on the integration error.
\begin{assumption}
The integration error from $\ts$ to $\ts+\delay$ is bounded by $\boundrk$ as
\begin{align}
    \norm{\zpred - \zstate}_{\ts'} \leq \boundrk = \frac{M \istep^{\iorder} + \ferror}{\liprk}\left(e^{\liprk\delay} - 1\right).
    \label{eq:rk_bound}
\end{align}
$\liprk$ is the Lipschitz constant of the one-step RK function~\cite{atkinson2011numerical}; $\ferror$ is the upper bound on model error; and $M$ is a constant related to the smoothness of $\xi(\cdot)$.
\end{assumption}

Before stating the result for the predictive controller, we define the following useful quantities based on Lipschitz constants of $g(\cdot)$, $\actd(\cdot)$ and $\dot{\actd}(\cdot)$:
\begin{align}
    \mu &= \sqrt{3}\max\big\{\rho + \lipadot, \ \lambdamax\lipa + \lipadot(1+\lipa)\big\} \\
    \nu &= \sqrt{3}\max\big\{\lambdamax\lipa + (\lipg +\lipadot)(1+\lipa),  \\
    & \qquad \qquad \qquad  \lambdamax+\lipadot+\lipg, \ \omegamax+\lipadot \big\} \notag \\ 
    \nu_0 &= \sqrt{3}\max\big\{\lipg(1+\lipa), \ \gammamax+\lipg, \ \rho+\omegamax \big\}
    \label{eq:lyap_constants}
\end{align}
The numerical predictive controller under periodic sampling can be stated as follows.

\begin{theorem}
At $t=\ts$, prediction $\zpred(\ts+\delay)$ can be estimated from numerical integration with~\cref{eq:rk_pred}. The predictive controller is defined from~\cref{eq:ctrl_fo_obs} as
\begin{equation}
    \sig(\ts) = \actdobs\big(\zpred(\ts+\delay), \ts+\delay\big).
    \label{eq:ctrl_sample_pred}
\end{equation}
Suppose the sampling period satisfies
\begin{equation}
    \Ts < \frac{1}{\nu}\ln{\left[1 + \left(\frac{\nu}{\nu_0}\right) \frac{ c_3'' }{2 \alpha \mu }\right]}.
    \label{eq:sampling_cond}
\end{equation}
Then $\exists \ 0 < \delta_1 \leq \delta_2 $ such that overall system~\cref{eq:dyn_z} is exponentially stable for $\delta_1 \leq \norm{\zstate} \leq \delta_2$ under~\cref{eq:ctrl_sample_pred}.
\end{theorem}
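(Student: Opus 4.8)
The plan is to treat the sampled, prediction-based controller~\cref{eq:ctrl_sample_pred} as a perturbation of the ideal continuous observer-based controller~\cref{eq:ctrl_fo_obs}, and to certify stability with the \emph{same} composite Lyapunov function $\lyap_2 = \lyap + \alpha\norm{\acterr}^2 + \beta\norm{\aobserr}^2$ used in~\cref{thm:ctrl_fo_obs}. Along the true sampled-data trajectory of~\cref{eq:dyn_z} I would write $\dot{\lyap}_2$ as the nominal contribution, which already satisfies $\dot{\lyap}_2 \leq -c_3''\norm{\zstate}^2$ by~\cref{eq:lyapo_dot}, plus an error term generated because the actuator holds the frozen signal $\sig(\ts) = \actdobs(\zpred(\ts+\delay),\ts+\delay)$ over $t\in[\ts+\delay,\tsp{1}+\delay)$ rather than supplying the instantaneous ideal command $\actdobs(\zstate(t),t)$. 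Since $\sig$ enters~\cref{eq:dyn_z} only through the $\acterr$-dynamics via the term $\ldelay\sig$, this discrepancy appears in $\dot{\lyap}_2$ as the single cross term $2\alpha\acterr^\top\ldelay\big(\sig(\ts)-\actdobs(\zstate(t),t)\big)$, which I would bound by $2\alpha\norm{\zstate}$ times the control discrepancy.

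The control discrepancy splits, by the triangle inequality, into (i) the numerical prediction error $\norm{\zpred(\ts+\delay)-\zstate(\ts+\delay)} \leq \boundrk$ from~\cref{eq:rk_bound}, and (ii) the inter-sample state drift $\norm{\zstate(t)-\zstate(\ts+\delay)}$ incurred while the command is held. For (ii) I would apply a Gr\"onwall estimate to~\cref{eq:dyn_z}: because the origin is an equilibrium, $\norm{\xi}$ grows at most linearly in $\norm{\zstate}$ with rate $\nu$ while the frozen-input contribution is governed by $\nu_0$, yielding a drift bound of the form $\tfrac{\nu_0}{\nu}(e^{\nu\Ts}-1)\norm{\zstate}$ over one period. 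Passing both pieces through the Lipschitz constant of $\ldelay\actdobs$, which is precisely what $\mu$ collects in~\cref{eq:lyap_constants}, gives
\[
    \dot{\lyap}_2 \leq -c_3''\norm{\zstate}^2 + 2\alpha\mu\tfrac{\nu_0}{\nu}(e^{\nu\Ts}-1)\norm{\zstate}^2 + 2\alpha\mu\boundrk\norm{\zstate}.
\]

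The sampling condition~\cref{eq:sampling_cond} is exactly the requirement that $2\alpha\mu\tfrac{\nu_0}{\nu}(e^{\nu\Ts}-1) < c_3''$, so the two quadratic terms combine into a strictly negative $-\big(c_3''-2\alpha\mu\tfrac{\nu_0}{\nu}(e^{\nu\Ts}-1)\big)\norm{\zstate}^2$; the residual term $2\alpha\mu\boundrk\norm{\zstate}$, which does not vanish because of the model error $\ferror$ inside $\boundrk$, is dominated once $\norm{\zstate}$ exceeds a threshold, fixing the lower radius $\delta_1$. The upper radius $\delta_2$ enters because all the Lipschitz and smoothness constants ($\lipg$, $\lipa$, $M$, and hence $\mu,\nu,\nu_0,\boundrk$) are valid only on a compact set, so I would restrict to $\norm{\zstate}\leq\delta_2$ and verify that the annulus $\delta_1\leq\norm{\zstate}\leq\delta_2$ is forward invariant with $\dot{\lyap}_2<0$, giving exponential convergence toward the ball of radius $\delta_1$. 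The main obstacle is step (ii): carefully chaining the prediction bound $\boundrk$ with the inter-sample drift, then converting the Gr\"onwall growth estimate on $\norm{\zstate}$ into the control perturbation so that the constants collapse exactly onto the $\mu,\nu,\nu_0$ of~\cref{eq:lyap_constants} and reproduce the closed-form threshold~\cref{eq:sampling_cond}; keeping the state- and time-arguments of $g$, $\actd$, and $\dot{\actd}$ consistent through this chain, and reconciling the $\norm{\zstate}$ evaluated at $\ts+\delay$ with the one at $t$, is where the bookkeeping is most delicate.
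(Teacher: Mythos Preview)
Your plan is essentially the paper's: the same composite Lyapunov function $\lyap_2$, the same split of the held-command discrepancy into prediction error plus inter-sample drift, and a Gr\"onwall estimate for the latter, with the sampling condition~\cref{eq:sampling_cond} read as forcing the coefficient of $\norm{\zstate}^2$ negative. Two refinements are needed to match the paper's argument.

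First, your two-way split of $\actdobs(\zpred(\ts'),\ts')-\actdobs(\zstate(t),t)$ drops the explicit time argument: even with $\zpred=\zstate$, the nonautonomous dependence of $\actdobs$ on $t$ through $\stated(t)$ contributes a term proportional to $|t-\ts'|$. The paper carries a $(1/\sqrt{3})|t-\ts'|$ term alongside $\boundrk$ and the state drift, and after Gr\"onwall it produces the additional nonvanishing piece $\tfrac{1}{\sqrt{3}\nu}(e^{\nu\Ts}-1)$ that, together with $\boundrk$, sets the inner radius $\delta_1$. Second, your explanation of $\delta_2$ via compactness of Lipschitz constants is not how the paper closes the estimate. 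The perturbation after Gr\"onwall is $2\alpha\mu\norm{\zstate(t)}\cdot\big[\tfrac{\nu_0}{\nu}\norm{\zstate(\ts')}(e^{\nu\Ts}-1)+\cdots\big]$, i.e.\ it mixes $\norm{\zstate(t)}$ with $\norm{\zstate(\ts')}$, and one cannot simply collapse this to $\norm{\zstate}^2$. The paper resolves this by introducing $\phi\in(0,1)$ via~\cref{eq:sampling_cond}, picking $\epsilon\in(\sqrt{\phi},1)$, and working on the annulus $\norm{\zstate}\in[\epsilon\delta,\delta]$ for $\delta$ above an explicit threshold; on that annulus $\norm{\zstate(\ts')}\leq\delta$ and $\norm{\zstate(t)}\geq\epsilon\delta$ can be traded against each other to recover $\dot{\lyap}_2\leq -c_3''(1-\epsilon)\norm{\zstate}^2$, and then $\delta_1=\epsilon\delta$, $\delta_2=\delta$. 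You already flagged the $\norm{\zstate(t)}$ versus $\norm{\zstate(\ts')}$ reconciliation as the delicate step; the annulus-with-$(\phi,\epsilon)$ device is the missing mechanism.
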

\begin{proof}
We start from the same Lyapunov candidate $\lyap_2(z)$ as in~\cref{thm:ctrl_fo_obs}. Differentiate $\lyap_2$ with respect to time and substitute in~\cref{eq:dyn_z,eq:ctrl_sample_pred}, we get the following inequalities after simplification
\begin{equation*}
\begin{split}
    \dot{\lyap}_2 &= \dot{\lyap} + 2\alpha \acterr^\top \dot{\acterr} + 2\beta \aobserr^\top \dot{\aobserr} \\
    &\leq -c_3''\norm{z}^2 + 2 \alpha \mu \norm{z} \bigg\{\norm{\zpred(\ts') - \zstate(\ts')} \\
    & \hspace{0.9in} + \norm{\zstate(t) - \zstate(\ts')} +  (1/\sqrt{3})\norm{t-\ts'}\bigg\}
\end{split}
\end{equation*}
We can express $\zstate(t) = \zstate(\ts') + \int_{\ts'}^{t} \xi\big(\zstate(s), \sig(\ts),s\big)ds$ using~\cref{eq:dyn_z,eq:ctrl_sample_pred}. The inequality can be reduced to 
\begin{align*}
    \norm{\zstate(t) - \zstate(\ts')} &\leq \mu \norm{\zpred(\ts') - \zstate(\ts')}(t-\ts') \\
    & \qquad+ \nu_0 \norm{\zstate(\ts')}(t-\ts') + \frac{1}{2\sqrt{3}}(t-\ts')^2 \\
    & \qquad \quad + \int_{\ts'}^{t}\nu \norm{\zstate(s) - \zstate(\ts')}ds 
\end{align*}
with~\cref{eq:equilibrium}, \cref{eq:lyap_constants}, and~\cref{asm:lipschitz}. We can apply Grönwall's lemma to the above inequality; and~\cref{eq:rk_bound} to $\norm{\zpred(\ts') - \zstate(\ts')}$:
\begin{align*}
\dot{\lyap}_2 &\leq -c_3''\norm{z}^2 + 2 \alpha \mu \norm{z} \bigg\{ \boundrk \notag \\
& + \left(\frac{\nu_0}{\nu}\norm{\zstate(\ts')} + \frac{\mu}{\nu}\boundrk + \frac{1}{\sqrt{3}\nu}\right) \left( e^{\nu(t-\ts')} - 1 \right) \bigg\}
\end{align*}
Thus, for any sampling period that satisfies~\cref{eq:sampling_cond}, the following equation holds
\begin{equation}
    \Ts = \frac{1}{\nu}\ln{\left[1 + \left(\frac{\nu}{\nu_0}\right) \frac{ \phi c_3'' }{2 \alpha \mu }\right]}
    \label{eq:phi_relation}
\end{equation}
with $\phi \in (0, 1)$. We can define $\epsilon $ such that $0 < \phi < \sqrt{\phi} < \epsilon < 1$. Therefore, for any
\begin{equation}
    \delta \geq \frac{\frac{2\alpha\mu\nu_0}{c_3''}\boundrk + (\mu \boundrk + \frac{1}{\sqrt{3}})\phi}{\nu_0 (\epsilon^2 - \phi)},
    \label{eq:delta_upper}
\end{equation}
it can be shown using~\cref{eq:phi_relation} that
\begin{equation*}
    \frac{2\alpha \mu}{\epsilon c_3''}\left[\boundrk + \frac{1}{\nu} \left(\nu_0 \delta + \mu \boundrk + \frac{1}{\sqrt{3}} \right)\left(e^{\nu \Ts} - 1\right)\right] \leq \epsilon \delta
\end{equation*}
And we can state that $\forall \norm{z} \in [\epsilon\delta, \delta]$, we have $\dot{\lyap}_2 \leq -c_3''(1-\epsilon)\norm{z}^2$, and therefore
\begin{equation*}
    \norm{\zstate(t)} \leq \sqrt{\frac{c_2''}{c_1''}}\norm{\zstate(t_0)}\exp\left(-\frac{c_3''(1-\epsilon)}{2c_2''}(t-t_0)\right),
\end{equation*}
which guarantees exponential convergence with rate $c_3''(1-\epsilon)/(2c_2'')$. Setting $\delta_1 = \epsilon\delta$ and $\delta_2 = \delta$ completes the proof.
\end{proof}
\begin{remark}
From~\cref{eq:delta_upper}, it can be shown that $\epsilon\delta$ is lower bounded by $\left[\frac{2\alpha\mu\nu_0}{c_3''}\boundrk + \left(\mu \boundrk + \frac{1}{\sqrt{3}}\right)\phi\right]/\left[\nu_0(1 - \phi)\right]$.
This gives an asymptotic region within which exponential convergence is not proven sufficiently. As $\phi \to 0$, we get its continuous limit $2\alpha\mu\boundrk/c_3''$.
\label{rmk:delta_lower}
\end{remark}
\begin{remark}
The limit for sampling time in~\cref{eq:sampling_cond} is a sufficient condition that considers the worst case of which sampling error $\norm{\zstate(t)-\zstate(\ts')}$ can grow during $t \in [\ts', \tsp{1}']$. In reality, a sampling period higher than the bound can still yield reasonable stability, as seen in event-triggered controllers~\cite{tabuada2007event}.
\end{remark}

\subsection{Effects of Numerical Prediction Scheme on Delay}
\label{ssec:rk_pred}
In the case of our proposed predictive controller~\cref{eq:ctrl_sample_pred}, we postulate $\delayc$ is mainly affected by computation of predictor~\cref{eq:rk_pred} and controller $\actd''(\cdot)$. The predictor integrates $(\delayc+\delays)/\istep$ steps of target function using RK method of order $\iorder \in \{1, 2, 3, 4\}$, which requires the evaluation of target function $\iorder$ times. Thus $\delayc$ can be written as
\begin{equation*}
    \delayc = \frac{\delayc + \delays}{\istep} (\iorder \compf + \compi) + \compc,
\end{equation*}
and $\compf$, $\compc$, and $\compi$ are the respective time for evaluating $f(\cdot)$, $\act''(\cdot)$, and other related numerical operations. In turn, $\delayc$ can be solved as
\begin{equation}
    \delayc = \frac{\istep\compc + \delays(\iorder \compf + \compi)}{\istep - \iorder \compf - \compi}.
\end{equation}
It is clear that $\istep > \iorder\compf + \compi$ is required for feasible $\delayc$, and that $\istep$ cannot exceed the total delay (i.e $\istep \leq \delayc + \delays$). Furthermore, $\delayc$ needs to fit within sampling period $\Ts$. We can derive that $\istep$ has to fall within the range:
\begin{equation}
    \istep \in \left[\frac{\Ts + \delays}{\Ts - \compc}(\iorder \compf + \compi),\quad \delays + \compc + \iorder \compf + \compi\right]
    \label{eq:istep_range}
\end{equation}
For a feasible $\istep$ to exist, it follows directly from the above equation that 
$\Ts \geq \compc + \iorder \compf + \compi$. $\delay$ can thus be represented in terms of $\istep$, $\iorder$, and other pre-determined quantities:
\begin{equation}
    \delay = (\delays + \compc)\frac{\istep}{\istep - \iorder\compf - \compi}.
    \label{eq:comp_delay}
\end{equation}
Combining~\cref{eq:rk_bound,eq:comp_delay}, we obtain
\begin{equation}
    \boundrk = \frac{M \istep^{\iorder} + \ferror}{\liprk}\left(e^{\liprk \frac{\istep(\delays + \compc)}{\istep - \iorder\compf - \compi}} - 1\right),
    \label{eq:erk_combined}
\end{equation}
which admits a minimum within~\cref{eq:istep_range}. Since the prediction error and $\boundrk$ affect the overall convergence rate of the system, a choice of $\istep$ and $\iorder$ will directly affect the controller performance.

\subsection{Truncated Predictive Control with Numerical Derivative}
\label{sec:unicorn}
We can also treat $\ldelay^{-1}$ as the diagonal matrix of time constants for the actuator dynamics. In many cases, $\ldelay^{-1}$, $\delay$ and $\Ts$ are on the same small timescale, i.e $\C{O}(\ldelay^{-1}) \sim \C{O}(\delay) \sim \C{O}(T) \ll 1$. Using 1st-order RK method (Euler's method) on~\cref{eq:ctrl_fo} and applying backward difference method to $\dot{\actd}(\ts)$, we can write the RK predictive controller as
\begin{align}
    \actd''(\ts') &= \actd(\ts) + (\ldelay^{-1} + \delay )\dot{\actd}(\ts) + \C{O}(\ldelay^{-1}\delay) \notag \\
    &= \actd(\ts) + \C{O}(\Ts^2) \notag \\
    & \qquad + (\ldelay^{-1} + \delay )\left[\frac{\actd(\ts) - \actd(\tsm{1})}{\Ts} + \C{O}(\Ts) \right] \notag \\
    &= \actd(\ts) + (\ldelay^{-1} + \delay )\left[\frac{\actd(\ts) - \actd(\tsm{1})}{\Ts} \right] + \C{O}(\Ts^2) \notag
\end{align}
where for simplicity we denote $\actd''(\ts') = \actd''\big(\zpred(\ts'), \ts'\big)$, $\actd(\ts) = \actd(\stateerr(\ts),\ts)$ and $\dot{\actd}(\ts) = \dot{\actd}\big(\stateerr(\ts),\dot{\stateerr}(\ts), \ts\big)$. We can thus define first-order truncation of the predictive controller:
\begin{equation}
    \actd''_{\mathrm{FO}}(\ts') = \actd(\ts) + (\ldelay^{-1} + \delay)\frac{\actd(\ts) - \actd(\tsm{1})}{\Ts},
    \label{eq:ctrl_fo_trunc}
\end{equation}
which has a truncation error of $\C{O}(T^2)$. The truncated controller~\cref{eq:ctrl_fo_trunc} avoids the evaluation of $\dot{\actd}(\cdot)$ and in turn $g(\cdot)$. Similar to~\cref{eq:ctrl_fo}, it also avoids the need for $\aobs$ and therefore saving computation on observer as well. As will be seen in later analysis, \cref{eq:ctrl_fo_trunc} performs favorably compared to more complex methods for a certain class of systems.

\section{Numerical Analysis}
\label{sec:analysis}
In this section, we conduct numerical experiments on a delayed double integrator example running our proposed control methods described in~\cref{sec:ctrl}. 

\subsection{Example: Delayed Double Integrator}

\begin{table}[b]
    \renewcommand{\arraystretch}{1.3}
    \centering
    \caption{Baseline parameters for Delayed Double Integrator}
    \begin{tabular}{c c c c c c c}
    $b$ & $\lambda$ & $k_1$ & $k_2$ & $\compf$ & $\compi$ & $\compc$  \\
    \hline
     $1.0$ & $5.0$ & $1.0$ & $2.0$ & $0.005$ & $0.0$ & $0.025$  \\
    \hline
    \end{tabular}
    \label{tab:ddi_params}
\end{table}

We consider trajectory tracking for simple double integrator dynamics, with delayed force actuation:
\begin{equation}
    \dot{\state}_1 = \state_2, \quad \dot{\state}_2 = b  \act_1, \quad \dot{\act}_1 = -\lambda\act_1 + \lambda\sig(t-\delay).
    \label{eq:eg_ddi}
\end{equation}
Let the scalar states $\state_1$ and $\state_2$ denote position and velocity respectively. $b$ is a known scalar actuation multiplier, and $\act_1$ is the actuator input with first-order delay $\lambda$. The goal is to track $\state_1 \to \stated(t)$ and $\state_2 \to \dot{\stated}(t)$. The error dynamics are
\begin{equation}
    \dot{\stateerr}_1 = \stateerr_2,  \hspace{0.75in}  \dot{\stateerr}_2 = b \act_1 - \ddot{\stated}(t).
\end{equation}
We use a baseline feedback linearizing controller of the form:
\begin{equation}
    \actd_1(\stateerr, t) = b^{-1}\Big(\ddot{\stated}(t) - k_1 \stateerr_1 - k_2 \stateerr_2 \Big) 
\end{equation}
which can be proven to exponentially stabilize the undelayed system. Although the base dynamics are relatively simple, the addition of an aggressive trajectory $\stated(t)$, large delays, and discrete sampling will pose difficulties for the baseline controller. We will use this example to study different effects on overall performance from various components of our proposed methods. \Cref{tab:ddi_params} lists related parameters for the system that are set or calculated.

\subsection{Effects of Computation Delay on Control Performance}
\begin{figure}[!t]
\centering{
	\subfloat[Sample period fixed at $\Ts= \SI{0.1}{s}$, $\delays \in \{0.2, 0.3\}\si{s}$ across rows, and $\ferror \in \{0.0, 0.2, 0.5\}$ across columns.]{
		\includegraphics[width=0.9\linewidth]{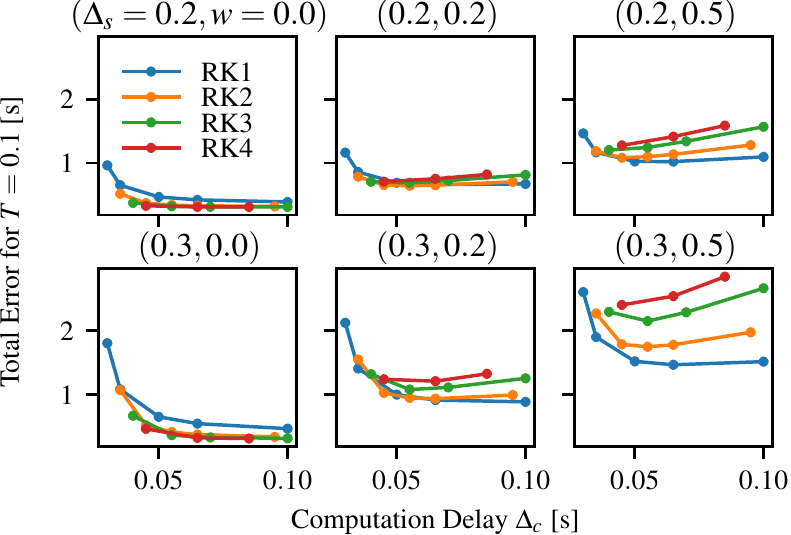}
		\label{fig:rk_trend_fixed_T}
	}
	\\
	\subfloat[Varying sample period $\Ts = \delayc$, with $\ferror \in \{0.0, 0.2, 0.5\}$]{
		\includegraphics[width=0.9\linewidth]{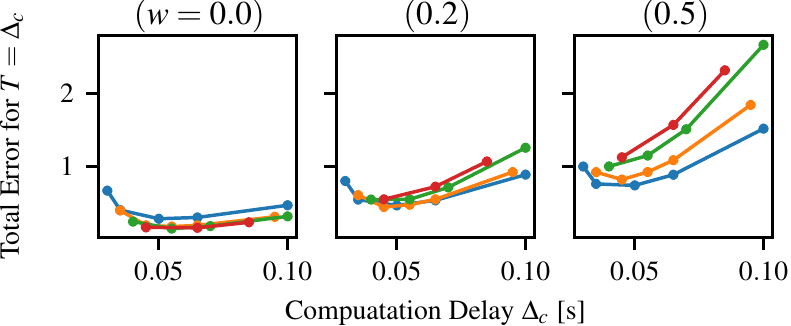}
		\label{fig:rk_trend_asap}
	}

}
\caption{Theoretical Total error bound vs. computation delay $\delayc$ for different system delay $\delays$ and model error $\ferror$. (Top) Fixed control period $\Ts = 0.1\si{s}$. (Bottom) Variable control period $\Ts = \delayc$.}
\label{fig:rk_trend_theory}
\end{figure}

\begin{figure}[!t]
\centering
\includegraphics[width=0.9\linewidth]{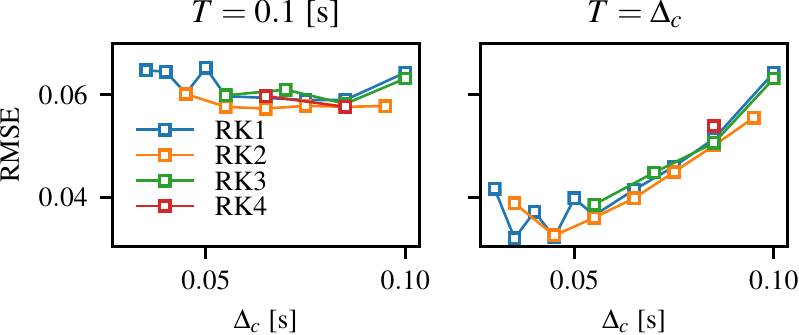}
\caption{Simulation tracking RMSEs for different integration schemes and step sizes. (Left) fixed sample period $\Ts = \SI{0.1}{s}$. (Right) variable sample period $\Ts = \delayc$.}
\label{fig:fixed-T-vs-T=DeltaC}
\vspace{-3mm}
\end{figure}

From the Lyapunov analysis in~\cref{thm:ctrl_fo_obs}, together with~\cref{eq:erk_combined}, we can predict trade-offs between integration schemes ($\iorder$ and $\istep$) and system stability by plotting the Lyapunov derivative error bound. In~\cref{fig:rk_trend_fixed_T}, we fix the sample period $\Ts = \SI{0.1}{s}$ and examine variations of system delay $\delays$ and model error $\ferror$. The total error decreases with increasing $\delayc$ for small $\ferror$, implying the benefit of maximizing integration accuracy as long as $\delayc \leq \Ts$. On the other hand, with higher $\ferror$, we observe a reversal in trend, where increasing numerical complexity no longer decreases total error, and computation delay $\delayc$ should be minimized for better result. In~\cref{fig:rk_trend_asap}, we adapted the sampling period to the computation delay $\Ts = \delayc$. These results show that faster computation is strongly favored. Moreover, Euler's method outperforms other higher order RK's when $\ferror$ is high. Results from numerical simulation corroborates our conjecture as shown in~\cref{fig:fixed-T-vs-T=DeltaC}. We observe similar $\delayc$ versus steady state tracking root-mean-square-error (RMSE) patterns when compared to~\cref{fig:rk_trend_theory}.

\subsection{Control Performance Benchmarks}

We conduct comparisons of our proposed controllers, $\actd''(\cdot)$ and $\actd''_{\mathrm{FO}}(\cdot)$ (given in \cref{eq:ctrl_sample_pred,eq:ctrl_fo_trunc}, respectively) to the baseline controller $\actd(\cdot)$ and a reasonably tuned linear PD controller. The trajectory considered is a sine function of varying frequency. For each test, we let the system run for a horizon of $\SI{20}{s}$ and then measure the steady-state RMSE. In~\cref{fig:sys-delay-rmse}, we simulate the system with different system delays, $\delays$. Throughout the test, the predictive controller $\actd''(\cdot)$ maintains a low level of RMSE, even though it is computationally more complex and has a higher $\delayc$ than the others. As discussed in \cref{sec:unicorn}, the truncated control $\actd''_{\mathrm{FO}}(\cdot)$ is a first order approximation of the full predictive control scheme. It performs well for $\delays < \SI{0.4}{s}$ but fails for larger values. The PD control maintains stability for most of the range, but has worse RMSE than $\actd''(\cdot)$. Not surprisingly, the naively applied baseline control $\actd(\cdot)$ takes high error and becomes unstable even for moderate $\delays$. \Cref{fig:freq-rmse}) shows almost identical rankings in RMSEs. It's interesting to note that $\actd''_{\mathrm{FO}}(\cdot)$ outperforms its more sophisticated counterpart $\actd''(\cdot)$ for small delay, likely due to significant reduction in computation cost.

To characterize the performance of $\actd''_{\mathrm{FO}}(\cdot)$ on different types of delay, we put it through varying combinations of $\lambda$ and $\delay$. \Cref{fig:uni-delay-ratio} shows that the truncated controller is delay-type agnostic when $\delay+1/\lambda$ is moderate, and has trouble dealing with larger $\delay$. This is expected since the assumption of $\C{O}(1/\lambda) \sim \C{O}(\delay)$ breaks down for large $\delay$.

\begin{figure}[t]
    \centering{
    	\subfloat[RMSE vs. system delay $\delays$.]{
    		\includegraphics[width=0.45\linewidth]{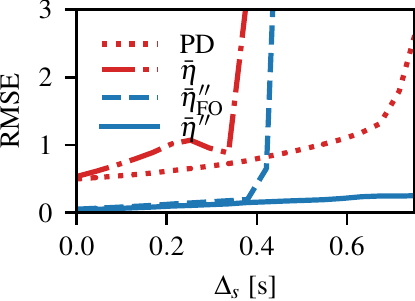}
    		\label{fig:sys-delay-rmse}
    	}
    	\hspace{0.5pt}
    	\subfloat[RMSE vs. $\stated(t)$ frequency.]{
    		\includegraphics[trim=0 -3 0 0,clip,width=0.45\linewidth]{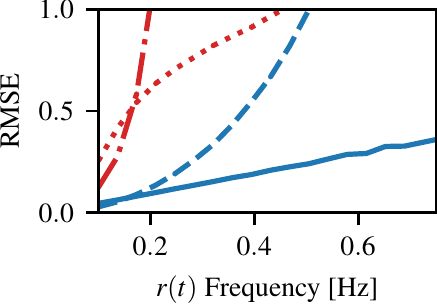}
    		\label{fig:freq-rmse}
    	}
    }
    \caption{Comparisons of PD, baseline $\actd(\cdot)$, truncated $\actd''_{\mathrm{FO}}(\cdot)$ and full predictive control $\actd''(\cdot)$. RMSE is steady-state root-mean-square-error.} 
    \vspace{-3mm}
\end{figure}

\begin{figure}
    \centering
    \includegraphics[width=\linewidth]{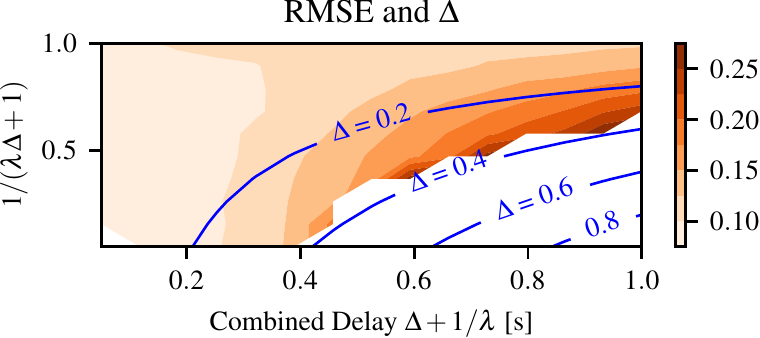}
    \caption{Contour of RMSE and transport delay $\delay$ for truncated predictive control $\actd''_{\mathrm{FO}}(\cdot)$~\cref{eq:ctrl_fo_trunc}. The horizontal axis is the combined delay ($\delay + 1/\lambda$) in seconds. The vertical axis is its ratio of first-order delay $1/(\lambda \delay + 1)$.}
    \label{fig:uni-delay-ratio}
    \vspace{-3mm}
\end{figure}


\addtolength{\textheight}{-5.5cm}

\section{Conclusion}
\label{sec:conclusion}
We proposed a control augmentation strategy that transforms exponentially-stabilizing controllers for an undelayed system to a class of sample-based, predictive controllers with numerical integration.  The predictive controllers exponentially stabilize the corresponding sample-based system with FOPDT delay and can change computation complexity under different conditions. We performed hybrid stability analysis on the overall system, which provided insights on how discrete time features such as sampling period and integration step and order affect output stability. We demonstrated the efficacy of our methods through numerical analysis of our theoretical bounds and simulations of a delayed double integrator system using our proposed control methods.  
Our analysis demonstrated the often overlooked importance of computation delay in control design. In conclusion, our predictive controller and its truncated variants provide an easily applicable improvement for discrete control tasks in different computation, network, and dynamic environments.







\bibliographystyle{IEEEtran}
\bibliography{IEEEabrv,ref}

\end{document}